\def\ps@headings{%
\def\@oddhead{\mbox{}\scriptsize\rightmark \hfil \thepage}%
\def\@evenhead{\scriptsize\thepage \hfil \leftmark\mbox{}}%
\def\@oddfoot{}%
\def\@evenfoot{}}
\newtheorem{theorem}{Theorem}
\long\def\comment#1{}
\def\BState{\State\hskip-\ALG@thistlm}
\begin{document}

\title{TEE-based Selective Testing of Local Workers in Federated Learning Systems}

\author{Wensheng~Zhang, Trent~Muhr\\
Computer Science Department, Iowa State University\\
Ames, Iowa, 50011}%

\IEEEtitleabstractindextext{
\begin{abstract}
This paper studies ...
\end{abstract}

\begin{IEEEkeywords}
Federated Learning, Confidentiality, Verifiable Computation, Game Theory.
\end{IEEEkeywords}}

\maketitle

\IEEEpeerreviewmaketitle

\begin{abstract}
    This paper considers a federated learning system composed of
    a central coordinating server and 
    multiple distributed local workers,
    all having access to trusted execution environments (TEEs).
    In order to ensure that the untrusted workers
    correctly perform local learning,
    we propose a new TEE-based approach that also combines techniques from applied cryptography, smart contract and game theory.
    Theoretical analysis and implementation-based evaluations show that,
    the proposed approach is secure, efficient and practical. 
\end{abstract}

\section{Introduction}

Recent developments such as the advent of IoT \cite{Gubbi2013InternetOT} and 
an increasingly cloud-oriented healthcare industry \cite{Esposito2018BlockchainAP} have led to 
federated learning becoming an area of growing interest.
Federated learning~\cite{McMahan2017CommunicationEfficientLO} 
is a distributed, collaborative machine learning paradigm, 
where multiple edge devices or servers which hold local data samples 
collaboratively train their data to obtain a global model; 
a certain aggregation {\em server} is usually involved to coordinate such collaboration.
Here, the edge devices or servers are often called local {\em workers} of the system.  
As only model parameters, instead of data samples, are exchanged,
this decentralized machine learning paradigm is appealing 
due to its strength in protecting data privacy for the participants.  

Along with the popularity of federated learning comes a host of challenges, 
including communication efficiency, 
resilience to non-I.I.D. distribution of data samples, 
tolerance of dynamic participation, 
and providing security and privacy in the process.
Particularly, the distributed nature of federated learning 
introduces new security concerns.
It may be possible for a curious server to 
infer information about the data used by local workers 
during the training process. 
Making use of cryptographic primitives such as masking and public key cryptography,
secure aggregation~\cite{Bonawitz2017PracticalSA,Fereidooni2021SAFELearnSA,Xu2020VerifyNetSA, Phong2018PrivacyPreservingDL} 
has been an attempt to prevent this. 

Though there is a lot of research addressing the security/privacy risks due to 
possible misbehavior of aggregation server, 
it is imperative to also secure the system against misbehaving local workers.
Without proper security measure in place,
a local worker may deviate from the supposed honest behavior in various ways.
For instance, it may use faked rather than truthful data in local training;
not select its local training data as randomly as expected;
not use as many as expected data samples;
not honestly execute the local training operations.
Such misbehavior could be treated as poisonous attacks.
However, existing countermeasures to such attacks may not be sufficiently accurate or timely,
and not be able to identify and thus avoid/punish the misbehaving local workers.
When certain privacy protection mechanisms are applied at the server,
the inputs from different local workers could be "blindly" aggregated
which makes it even more challenging to 
detecting misbehavior and identify misbehaving local workers.

In this paper, we propose a different scheme
that detects misbehavior directly and immediately at the local workers in a trusted manner. 
The approach is developed based on the following ideas.
First, 
contemporary computers (e.g., servers, personal computers and mobile devices)
have been commonly equipped with TEEs
based on technologies such as Intel SGX~\cite{IntelSGXexplained} and TrustZone~\cite{trustzone}. 
In order to directly monitor the behavior of local workers,
we propose to deploy monitoring functions in these TEEs.  

Second, the monitoring function is implemented by having 
the TEEs to directly and immediately repeat a selected subset of 
the operations the local workers are expected to have conducted. 
However, 
the execution in TEEs is 
less efficient than in traditional untrusted environments. 
For example, in a computer with Intel SGX, 
the trusted memory space 
is much smaller than the regular memory space, which restricts
the performance that the TEEs can attain. 
To address this limitation,
we propose a game theoretic design 
to minimize the involvement of TEEs in the monitoring. 
We have proved that, 
by requiring
each economically-greedy local worker 
to make a deposit of a small amount 
(e.g., the cost for executing only one stage of training a neural network) 
when it joins the federated learning system, 
testing the correctness of only a small number (e.g., two) of 
the operations the worker is expected to conduct
can enforce it to behaves honestly. 

Third, 
directly testing a small number of operations,
such as the forward or backward propagation over a convolutional or fully-connected layer,
can still be very inefficient,
because even a single operation could involve large inputs. 
To address this issue,
we further propose to convert the heavy tests into lightweight tests.
This way, the TEE-based selective testing becomes more efficient and practical.

We implement our proposed TEE-based selective testing scheme for an Intel SGX-based computer,
and evaluate its performance for forward/backward propagation through convolutional/fully-connected layers 
during the training of a neural network model.
We also compare our scheme to two reference schemes:
the original scheme which conducts training in untrusted execution environment without any security measure;
the all-SGX scheme which conducts training completely in an SGX enclave thus ensures honest execution. 
The performance is measured by the running time
of our scheme in the SGX enclave and in the untrusted environment, 
as well as the running time of the original scheme and the all-SGX scheme.

As shown by the evaluation, 
our scheme only incurs a very testing cost in the TEE. 
For efficient selective-testing, 
our scheme, however, 
introduces extra operations (such as constructing Merkle hash trees) 
to be conducted in the untrusted execution environment,
which incurs the major overhead of the scheme.
The evaluation results indicate that
such overhead is comparable to the costs of the original and all-SGX schemes,
and it gets relatively smaller as the input/output scale increases.
For instance, for a convolutional layer
with $256\times 256$ inputs, 16 $8\times8$ filters and stride $2$,
our scheme spends in TEE only   
360 $\mu$s for forward and 658 $\mu$s for backward propagation,
and the time it spends in untrusted environment is 
84815 $\mu$s for forward and 173460 $\mu$s for backward propagation.
In comparison, the all-SGX scheme spends
76868 $\mu$s and 223482 $\mu$s for forward and backward propagation, respectively;
the original scheme spends 
74196 $\mu$s and 112449 $\mu$s for forward and backward propagation, respectively. 
Note that, the total time that our scheme spends is similar to that by the all-SGX scheme,
but the majority of our scheme's time is spent in the untrusted environment,
which is more easily to be reduced through parallelism. 
This is different for the all-SGX scheme, for which 
the execution time is all spent in the SGX enclave and thus is more difficult to reduce. 
Hence, our scheme is more feasible and efficient in practice.

In the rest of the paper,
Section 2 introduces background and problem description.
Section 3 describes and analyzes the basic framework of our proposed scheme.
Section 4 presents the enhancements that further improve the efficiency of our proposed scheme. 
Implementation-based evaluations are presented in Section 5.
Section 6 briefly reviews related work.
Finally, Section 7 concludes the paper.

\section{Problem Description}

\subsubsection*{System Model}

We consider a federated learning system composed of 
one central server and multiple distributed local workers.
Each local worker has both trusted execution environment (TEE) and 
untrusted execution environment. 
The computational and storage capacities of the TEE 
are much smaller than those of the untrusted environment.

Each local worker has its own training data,
which should never be exposed to others.
Coordinated by the central server,
the local workers collaborate in building a global neural network model.
We assume  
all the parties agree on the hyperparameters of the model, 
including the number of layers, 
the number of neurons on each layer, 
the connectivity between neurons, 
the activation functions used, etc. 
The central server has an initial model; 
then, the system works round by round to update it. 
In the beginning of each round, 
each local worker downloads the current global model 
from the central server,
and uses its own data to update the weights of connections. 
The central server collects the updates, and
applies them to the global model to get 
a newer version used in the next round. 

\subsubsection*{Model for Neural Network Training}

The model has ${\mathcal L}$ layers:
input layer $1$, 
hidden layers $2$, $\cdots$, ${\mathcal L}-1$, 
and final layer ${\mathcal L}$ that computes loss function and gradient.
Each layer $l\in[{\mathcal L}]$ has $n_l$ neurons.
For each hidden layer $l$, 
forward and backward propagation 
are conducted in two stages: transformation and activation. 
For the transformation stage, 
the forward propagation transforms the outputs of layer $l-1$ to the inputs of layer $l$,
while the backward propagation transforms the gradients of inputs to the gradients of outputs.
There are various transformation functions; 
we consider only full-connections and convolutions.

{\noindent\underline{Full Connection}}:
    Let $\Theta$ denote the weight matrix of the connections 
    from the outputs of layer $l-1$ to the inputs of layer $l$;
    specifically, $\Theta$ has $n_{l-1}$ rows and $n_{l}$ columns, 
    and each element $\theta_{i,j}$ on row $j$ and column $i$ 
    is the weight of the connection 
    from output $j$ of layer $l-1$ to input $i$ of layer $l$.
    Further let $\overrightarrow{X}$ and 
    $\overrightarrow{Y}$ denote 
    the vector of outputs from layer $l-1$ and the vector of inputs to layer $l$ respectively;
    let $\overrightarrow{\nabla X}$ and $\overrightarrow{\nabla Y}$ denote
    the gradients of the two vectors respectively.
    Then, the forward propagation conducts the transformation
    \begin{equation}
        \overrightarrow{Y} = (\Theta)^\intercal\times\overrightarrow{X}, 
    \end{equation}
    and the backward propagation conducts the transformation
    \begin{equation}
        \overrightarrow{\nabla X} = \Theta\times\overrightarrow{\nabla Y}.    
    \end{equation}
    Meanwhile, the backward propagation also computes the update for each $\theta_{i,j}$,
    denoted as $\nabla\theta_{i,j}$, as follows: 
    \begin{equation}
        \nabla\theta_{i,j} = -\eta\cdot\overrightarrow{\nabla Y}[i]\cdot\overrightarrow{X}[j], 
    \end{equation}
    where each $\vec{v}[i]$ represents the $i$-th element of vector $\vec{v}$ and $\eta$ is learning rate.

{\noindent\underline{Convolution}}: 
    Let $\overrightarrow{F^{(1)}}, \cdots, \overrightarrow{F^{(n_F)}}$ denote 
    the set of $n_F$ filters where
    each $\overrightarrow{F^{(t)}}$ has $\alpha_F\times\alpha_F$ elements denoted as
    $\overrightarrow{F^{(t)}}[i,j]$ for $i,j\in [\alpha_F]$,
    and $\delta$ denotes the stride.
    The output matrix $\vec{X}$ from layer $l-1$,
    which has $\alpha_X\times\alpha_X$ elements,
    can be viewed as the union of a two-dimensional array of grids.
    The array has $\alpha_Y=\lfloor 1+\frac{\alpha_X-\alpha_F}{\delta} \rfloor$ rows
    where each row also has $\alpha_Y$ columns, and
    each grid has $\alpha_F\times\alpha_F$ elements;
    every two consecutive grids on the same row (or column) have their starting points separated by $\delta$ elements. 
    
    During the forward propagation, 
    each filter $\overrightarrow{F^{(t)}}$ maps every grid in $\vec{X}$ to an element in a filtered image denoted as $\overrightarrow{Y^{(t)}}$,
    which is a matrix of $\alpha_Y\times\alpha_Y$ elements.
    Specifically, letting the elements of the grid on row $r$ and column $c$ be denoted as
    $\vec{X}[(r-1)\delta +i,(c-1)\delta+j]$ for $i,j\in [\alpha_F]$,
    then the element of $\overrightarrow{Y^{(t)}}$ on row $r$ and column $c$ is
    \begin{equation}\label{eq:conv-forward}
    \overrightarrow{Y^{(t)}}[r,c]=
    \sum_{i,j\in[\delta]} 
    \vec{X}[(r-1)\delta+i,(c-1)\delta+j]\cdot \overrightarrow{F^{(t)}}[i,j].
    \end{equation}
    In the rest of the paper, let $\vec{Y}$ denote 
    $\langle \overrightarrow{Y^{(1)}},\cdots,\overrightarrow{Y^{(n_F)}} \rangle$.
    
    During the backward propagation, 
    the gradients for $\vec{X}$ (denoted as $\overrightarrow{\nabla X}$)
    should be computed based on $\overrightarrow{Y^{(t)}}$ for $t\in [n_F]$ (denoted as $\overrightarrow{\nabla Y^{(t)}}$) and all the filters;
    meanwhile, the updates to the filters should also be computed. 
    Specifically, every element $\overrightarrow{\nabla X}[i,j]$ is computed as
    \begin{equation}\label{eq:conv-backward}
        \overrightarrow{\nabla X}[i,j]=\sum_{t\in[n_F]}\overrightarrow{\nabla X^{(t)}}[i,j],
    \end{equation}
    where for each $t$, 
    $\overrightarrow{\nabla X^{(t)}}[i,j]$ is computed as
    \begin{equation}\label{eq:conv-backward-1}
        \sum_{\phi(u,v,i,j)}
        \{\overrightarrow{\nabla Y^{(t)}}[u,v]\cdot\overrightarrow{F^{(t)}}[i-(u-1)\delta,j-(v-1)\delta]\},
    \end{equation}
    where $\phi(u,v,i,j)$ is defined as 
    \[
    (u,v \in [\alpha_Y]) \wedge \{i-(u-1)\delta,j-(v-1)\delta\in [\alpha_F]\}.
    \]
    Also, for each $t$ and every $i,j\in [\delta]$, $\overrightarrow{{\nabla F}^{(t)}}[i,j]$ is
    \begin{equation}\label{eq:conv-update}
        -\eta\sum_{u,v\in [\alpha_Y]} \overrightarrow{\nabla Y^{(t)}}[u,v]\cdot\overrightarrow{X}[(u-1)\delta+i,(v-1)\delta+j].
    \end{equation}

    \comment{
    
    During the forward propagation, 
    we assume that each filter is used in computing $q$ inputs to layer $l$;
    without loss of generality, 
    each $\vec{F}_i$ is used in computing inputs 
    $\overrightarrow{Y^{(l)}}[(i-1)\cdot q + 1], \cdots, \overrightarrow{Y^{(l)}}[i\cdot q]$. 
    More specifically, each $\overrightarrow{Y^{(l)}}[(i-1)\cdot q + j]$ for $j\in [q]$
    is computed as the inner product of a certain sub-vector of $\overrightarrow{X^{(l-1)}}$
    and $\vec{F}_i$.
    The sub-vector and the filter are selected based by several model parameters
    such as the size and number of filters and the stride.
    In order to formalize the transformation in a generic manner 
    without involving model-specific details, 
    we introduce the following mapping function:
    \begin{equation}
    \phi_X(x): [|\overrightarrow{Y^{(l)}}|]\rightarrow 2^{[|\overrightarrow{X^{(l-1)}}|]}.
    \end{equation} 
    Given $x$ such that $x=(i-1)\cdot q + j$ for $i\in [n_F]$ and $j\in [q]$, 
    $\phi_X(x)$ returns a sequence of positions in $\overrightarrow{X^{(l-1)}}$, 
    which indicates a sub-vector of $\overrightarrow{X^{(l-1)}}$ 
    composed of the elements at the positions $\phi_X(x)$ 
    that should multiply $F_i$ to get $\vec{Y}[x]$. 
    We denote the sub-vector as $\overrightarrow{X^{(l-1)}}[\phi_X(x)]$.
    That is,
    \begin{equation}\label{eq:conv-forward}
        \overrightarrow{Y^{(l)}}[x] = \overrightarrow{X^{(l-1)}}[\phi_X(x)] \cdot \overrightarrow{{F}_{\lceil\frac{x}{q}\rceil}}.
    \end{equation}

    During the backward propagation, the gradients for $\overrightarrow{X^{(l-1)}}$ (denoted as $\overrightarrow{\nabla X^{(l-1)}}$) 
    should be computed based on $\overrightarrow{Y^{(l)}}$ (denoted as $\overrightarrow{\nabla Y^{(l)}}$) and the filters; meanwhile,
    the updates to the filters should also be computed. 
    To formalize the processes, we utilize the inverse function of previously-defined $\phi_X(\cdot)$ as follows:
    \begin{equation}
        \phi^{-1}_X(\cdot): [|\overrightarrow{X^{(l-1)}}|]\rightarrow 2^{[|\overrightarrow{Y^{(l)}}|]}.
    \end{equation}
    Given $x\in [|\overrightarrow{X^{(l-1)}}|]$, $\phi^{-1}_X(x)$ returns a sequence of positions in $\overrightarrow{Y^{(l)}}$, 
    which indicates that $\overrightarrow{X^{(l-1)}}[x]$ is used in computing
    a sub-vector of $\overrightarrow{Y^{(l)}}$ composed of the elements at the positions $\phi^{-1}_X(x)$.
    Also, we define function $Pos(\vec{x},x_i)$ such that 
    \[
    Pos(\vec{x},x_i)=i~\iff~\vec{x}[i]=x_i.
    \]
    Hence, 
    the computation of $\overrightarrow{\nabla X^{(l-1)}}$ can be formalized as
    \begin{equation}\label{eq:conv-backward}
        \overrightarrow{\nabla X^{(l-1)}}[x] = 
        \sum_{\forall y\in\phi^{-1}_X(x)}
        \overrightarrow{\nabla Y^{(l)}}[y]\cdot \overrightarrow{F_{\lceil\frac{y}{q}\rceil}}[Pos(\phi_X(y),x)];
    \end{equation}
    the computation of updates to filters can be formalized as
    \begin{equation}\label{eq:conv-update}
        \overrightarrow{\nabla F_i}[j] = 
        \sum_{\forall k\in [q]}\overrightarrow{\nabla Y^{(l)}}[(i-1)q+k]\cdot \overrightarrow{X^{(l-1)}}[\phi_X((i-1)q+k)[j]]
    \end{equation}
    for $\forall i\in [n_F]$ and $\forall j\in [s_F]$.
    Note that, $\phi_X(x)[j]$ represents the $j$-th element of the sequence returned by $\phi_X(x)$.
    
    }

For the activation stage, 
the forward propagation feeds 
each input to an activation function,
denoted as $a(\cdot)$,
to get the corresponding output of layer $l$;
the backward propagation computes the gradients for the inputs
based on given gradients for the outputs and the definition of $a(\cdot)$.

\subsubsection*{Assumptions for Local Training Data}

Each local worker has its own training data, 
which is represented as records.
We assume that the validity of each record can be verified 
based on a digital signature mechanism. 
For example, 
it is reasonable to assume that valid medical data records should be  
digitally signed by certain authorized personnel and 
the digital signatures can be verified using
certain certified public keys,
so that any user knowing the certified public keys can 
verify such signatures and thus 
trust the information carried by the signed records. 
Hence, each record is assumed to bear the following format:
\begin{equation}
    \langle 
        x_1, \cdots, x_{n_X}; y_1, \cdots, y_{n_Y}; \sigma
    \rangle. 
\end{equation}
Here, 
$(x_1,\cdots,x_{n_X})$ is the vector of $n_X$ input features, 
$(y_1,\cdots,y_{n_Y})$ is the tag vector of $n_Y$ elements,
and $\sigma$ is a digital signature.
Particularly, 
a Merkle hash tree for the record is built 
with the hashes of $x_{i,1}, \cdots, x_{i,n_X}$ and 
$y_{i,1},\cdots,y_{i,n_Y}$ as leaf nodes,
the root of the above hash tree is called {\em record hash}, and
the hash is signed with an authorized private key to obtain a verifiable digital signature.

\subsubsection*{Security Assumptions and Goals}

In this work, 
we aim to address the following attacks 
that may be launched by a misbehaving local worker:
    using invalid (e.g., faked or modified) data for local learning;
    failing to choose training data randomly, 
    which is required by the commonly-used SGD method;
    failing to honestly conduct computation.
We assume local workers could be selfish or lazy, 
by pretending to have more data for training than they actually have, 
or by faking (skipping the complete procedure of) computation to save cost. 
Hence, we model them as economically-greedy; that is, 
they always intend to maximize their profits, computed as 
the incomes minus the costs that they have to pay.  

We do not consider the attack launched by the central server,
which may attempt to reveal the confidentiality of data owned by local workers. 
Such attacks can be addressed as follows:
each local worker reports encrypted updates, then
the TEE at the central server aggregates and decrypts the updates
to obtain a new global model. 

A variety of side-channel attacks have been discovered for SGX-based designs,
which are out of the scope of this paper. Note that,
our proposed scheme executes testing only after the untrusted local worker has
completed the tested tasks and submitted commitments which cannot be changed; 
hence, even if the worker can observe the execution of an enclave, 
it is not able to change its computation that has already been committed.

\comment{
Suppose the current model is denoted as
\begin{equation}
{\cal M} =
\langle
\vec{M}_0, \vec{C}_{0,1}, \vec{M}_1, \vec{C}_{1,2}, \cdots, \vec{C}_{L-2,L-1}, \vec{M}_{L-1}
\rangle,
\end{equation}
where each $\vec{M}_l$ is a matrix containing
the weights for all edges in layer $l$ and
$\vec{C}_{l,l+1}$ is the set of functions representing
the conversion from each output of layer $l$ to the corresponding input of layer $l+1$.
Particularly, $\vec{M}_l$ has $I_l$ rows and $O_l$ columns if the layer has $I_l$ inputs and $O_l$ outputs.

We assume each UL applies each model on $D$ inputs,
computes the cost functions,
aggregates the costs,
encrypts the result,
and reports the encrypted result to the TA securely.
Once the TA receives the encrypted results from all local learners,
it aggregates them together to get the final cost.
The final cost is securely transmitted to all ULs,
each then updating its model (i.e., the weights of edges) based on the final cost.
}

\section{The TEE-based Selective Testing Scheme}

\subsection{Primitives: Commitment and Verification}

\begin{algorithm}[htb]\small
\caption{Primitive for commitment and verification}\label{alg:commitment-basic}

{\bf Construct\_Commit($\vec{v}$)} 
\begin{algorithmic}[1]
\State construct a Merkle hash tree $MT(\vec{v})$ with $hash(\vec{v}[i])$ for $i=1,\cdots,|v|$ as leaf nodes;
\State let $comm(\vec{v})$ denote the root of $KT(\vec{v})$;
\State $\forall~i\in\{1,\cdots,|\vec{v}|\}$, let 
    \begin{equation}
        evid(\vec{v},i)=\langle h_1, h_2, \cdots, h_m\rangle
    \end{equation}
denote the sequence of co-path hashes for $hash(\vec{v}[i])$ on $MT(\vec{v})$, i.e., 
    \begin{equation}\scriptsize
        hash(\cdots hash(hash(\vec{v}[i]), h_1)\cdots,h_m)~=~comm(\vec{v}),
    \end{equation}
where $hash(x,y)$ is $hash(x|y)$ if $x$ is left sibling of $y$ on $MT(\vec{v})$ 
or $hash(y|x)$ otherwise;
\State return $\{comm(\vec{v}),~evid(\vec{v},1),~\cdots,~evid(\vec{v},|\vec{v}|)\}$.
\end{algorithmic}

{\bf Verify\_Element($u$, $i$, $comm(\vec{v})$, $evid$)} 
\begin{algorithmic}[1]
\State let $evid = \{h_1,\cdots,h_m\}$;
\State compute $c=hash(\cdots hash(hash(u), h_1)\cdots,h_m)$, where 
$hash(u)$ is treated as the $i$-th leaf node of $MT(\vec{v})$ and 
as in Construct\_Commit($\vec{v}$),
$hash(x,y)$ is $hash(x|y)$ if $x$ is left sibling of $y$ on $MT(\vec{v})$ 
or $hash(y|x)$ otherwise;
\If {$c=comm(\vec{v})$} 
\State return $True$;
\Else 
\State return $False$.
\EndIf
\end{algorithmic}
\end{algorithm}

We first introduce primitives
{\em Construct\_Commit} and {\em Verify\_Element}.
As formally presented in Algorithm~\ref{alg:commitment-basic},
primitive {\em Construct\_Commit} takes a vector $\vec{v}$ as input,
constructs a Merkle hash tree $MT(\vec{v})$ with the hashes of each elements of $\vec{v}$ as leaf nodes.
Then, the root of the tree is returned as the {\em commitment} of the vector.
Meanwhile, for each element $\vec{v}[i]$,
the sequence of its corresponding co-path hash values on $MT(\vec{v})$
is returned as the evidence for verifying it as the $i$-th element of $\vec{v}$.
Accordingly, primitive {\em Verify\_Element} takes four arguments,
i.e., an element $u$, an index $i$, the commitment $comm(\vec{v})$ for certain $\vec{v}$ and an evidence $evid$.
It assumes $u$ as the $i$-th element of $\vec{v}$ and 
makes use of the assumed co-path hash values in $evid$
to recompute the root of $MT(\vec{v})$.
If and only if the recomputed root is the same as $comm(\vec{v})$,
the element $u$ is confirmed.

\subsection{SIMD Computation}

When training a neural network model,
on each stage of each layer,
the same type of operation needs to be performed over different data.
Taking the forward propagation over a convolutional layer $l$ as example,
there are two stages.
For the first stage (transformation), 
the outputs of layer $l-1$ are transformed to the inputs of layer $l$ as follows:
the input of each neuron at layer $l$ is computed as 
the inner product of a filter matrix and a set of output elements of layer $l-1$.
For the second stage (activation),
at each neuron of layer $l$,
the input is fed to an activation function to obtain the output of the neuron. 
We call such computation paradigm at each stage of each layer as 
single instruction multiple data ({\em SIMD}) computation, 
and formalize it as $\vec{Y} = g(\vec{X})$,
where $g(\cdot)$ represents the operation,
$\vec{X}$ the vector of input and 
$\vec{Y}$ the vector of corresponding output.
When instantiated for the aforementioned transformation stage of convolutional layer $l$,
$g$ stands for the inner product operation,
$\vec{Y}$ is the vector of input to layer $l$,
and $\vec{X}$ is the vector containing all the subsets of 
layer $l-1$'s output elements used to compute the elements in $\vec{Y}$.
Therefore, the whole model training procedure can be formalized as  
a sequence of SIMD computations.

\subsection{The Proposed Selective Testing Scheme}

\begin{figure}[htp]
    \centering
    \includegraphics[width=8cm]{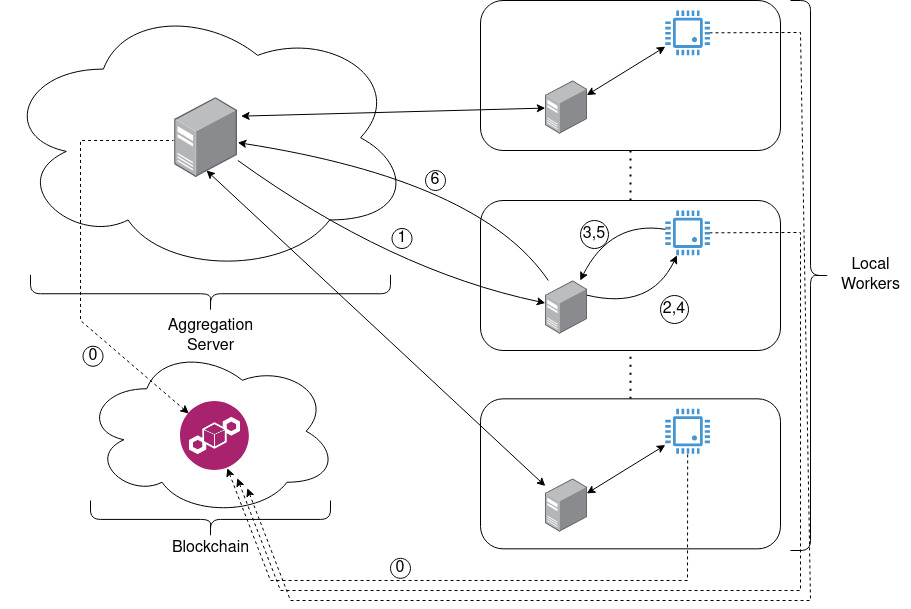}
    \caption{System overview: The system is comprised of an aggregation server and some number of untrusted workers, all with access to a blockchain with smart contract capabilities.
    (0) Each party participates in a smart contract.
    (1) The endorsed model is downloaded from the server.
    (2) Worker has the local enclave validate records for training and asks the enclave to select one for training.
    (3) TLM sends record choice to untrusted worker.
    (4,5) The untrusted worker and its local enclave engage in selective testing.
    (6) The final update is sent to the aggregation server, endorsed by the worker's enclave. 
    }
    \label{fig:overview}
\end{figure}

To effectively and efficiently verify if a local worker 
who participates federated learning has 
honestly conducted the procedure of training,
we propose a selective testing scheme 
that combines the techniques of 
game theory, applied cryptography and smart contracts on blockchain.

\subsubsection{System Components}

We define the following system components:
a central server (CS), 
multiple untrusted local workers (UW),
and one trusted local monitor (TLM) co-residing with each UW.
Here,
each TLM is run in a TEE.
When a UW joins the system,
the TLM co-located with the UW should authenticate itself to the CS. %
Then, the TLM should set up secret pairwise keys 
with the UW and the CS, respectively, 
to secure their communications. 
Also, we assume the CS does not collude with any UW.

\subsubsection{Signing Smart Contract}

The CS signs a smart contract with each UW.
With the contract, 
the UW makes a {\em small} deposit $d$ that is only required to be larger than 
twice of the maximal cost of executing one stage of SIMD computations.
If the UW is found dishonest by its co-located TLM through selective testing,
its deposit will be taken by the CS and it will be evicted from the system; 
otherwise, the UW will remain in the system and continue its participation.

\subsubsection{Validating and Preparing Local Data Records}

After a UW has signed the above smart contract with the CS,
it requests its co-located TLM to validate its data records 
and prepare them for federated learning. 
Each record 
$\langle$ $x_1$, $\cdots$, $x_{n_X}$, $y_1$, $\cdots$, $y_{n_Y}$, $\sigma$ $\rangle$
is processed as follows. 
First, the TLM checks the validity of the record.
That is, letting $\vec{v} = (x_1,\cdots, x_{n_X}, y_1, \cdots, y_{n_Y})$,
it computes $comm(\vec{v})=Construct\_Commit(\vec{v})$ and verifies if
$\sigma$ is a valid signature of $comm(\vec{v})$.
Second, 
the TLM assigns a unique identity $i\in [n_R]$ to the record, 
where $n_R$ is the number of such records. 
Thus, each record can be denoted as 
\begin{equation}
R_i = \langle i, x_{i,1}, \cdots, x_{i,n_X}, y_i, \cdots, y_{i,n_Y}, \sigma_i \rangle. 
\end{equation}
Then, a Merkle tree for all of the $n_R$ records is built with
$h_i=hash(i|\sigma_i)$ for all $i\in [n_R]$ as leaf nodes. 
The root hash of the tree is denoted as $h_R$.
The UW keeps this Merkle tree for later use, but 
the TLM only keeps $h_R$ and $n_R$.

\subsubsection{Initializing Each Round (i.e., testing for layer~1)}

After its local data records have been validated and prepared 
for federated learning by its co-located TLM,
a UW can formally participate the federated learning round by round.

The UW downloads the current global neural network model from the CS.
The model and its components should be signed by the CS so that
a malicious UW cannot modify them before they are given to the TLM.

For the simplicity of presentation, 
we assume that only one record is processed in each round
though our scheme can be extended for more general cases. 
For the purpose of randomly selecting data record for training 
(which is required by federated learning),
the TLM randomly selects an ID $i\in [n_R]$ at the beginning of a round, 
and asks the UW to pick the record with the ID for training.

In response, 
the UW retrieves the content of the selected record (i.e., $R_i$), 
the record hash (i.e., $h_i=hash(i|\sigma_i)$), and 
the corresponding co-path hash values on the Merkle tree of all $n_R$ records.
Then, it communicates $h_i$ and the corresponding co-path hash values,
which are called the {\em evidence} of the input, to the TLM.

Upon receiving the commitment,
the TLM verifies it by recomputing the root hash using $h_i$ and the evidence,
and checking if the recomputed root hash is the same as $h_R$. 
Once the verification succeeds, 
the TLM records $h_i$ and proceeds with the rest of the round;
otherwise, it identifies the UW as dishonest and quits the system.

\subsubsection{Testing for Each Hidden Layer}
\label{sec:select-test-general-model}

The operations at each hidden layer include one or more stages.
Along with a UW's execution at each stage,
its co-locating TLM conducts selective testing for the stage.
The operations of the UW and TLM, 
as well as their interactions, 
can be generally modelled as follows:

Suppose the SIMD computation at a stage 
has $n$ same-type computations.
Let $\vec{X}$ denote the input vector,
$\vec{Y}$ the output vector, and 
$g(.)$ the computation function.

Before this stage starts,
the TLM should have already obtained 
$comm(\vec{X})$ (i.e., the commitment for the input)
and the UW should be able to provide evidence for verifying each input.
Note that, if this stage is the first stage, 
the afore-described procedure for 
initializing each round has provided the detail on 
how the above are accomplished;
if this stage is not the first stage, as to be shown later,
its inputs should be the outputs of the previous stage,
for which the commitment and evidences should have been produced during the previous stage. 

This stage starts with the UW's execution.
The UW evaluates $g(.)$ with every element of $\vec{X}$ to obtain the corresponding output element in $\vec{Y}$.
Then, it computes the commitment and evidences for $\vec{Y}$ by calling $Construct\_Commit(\vec{Y})$,
keeps the results, and sends $comm(\vec{Y})$ to the TLM. 

Upon receiving $comm(\vec{Y})$,
the TLM randomly selects $p$ out of the $n$ computations to test.
For each of the selected computation $i\in [n]$,
with $\vec{X}[i]$ denoting the input element that should be used in the computation and 
$\vec{Y}[i]$ denoting the expected output element,
the testing is as follows:
The TLM requests the UW for 
the input element (denoted as $u_0$) and output element (denoted as $u_1$) of computation $i$,
as well as the evidences ($evid_0$ and $evid_1$ respectively) for 
verifying these elements to be $\vec{X}[i]$ and $\vec{Y}[i]$ respectively.  
Once receiving the above,
the TLM calls 
$Verify\_Element($ $u_0$, $i$, $comm(\vec{X})$, $evid_0)$ and
$Verify\_Element($ $u_1$, $i$, $comm(\vec{Y})$, $evid_1)$
to verify if $u_0=\vec{X}[i]$ and $u_1=\vec{Y}[i]$.
Then, it checks if $g(u_0)=u_1$.
If any of the tests fails, 
the TLM identifies UW as dishonest and stops participation.

\subsubsection{Testing for Layer $\cal L$}

The final layer computes the loss function during the forward propagation,
and computes the gradients for its input elements (i.e., the output elements from the last hidden layer).
Since these computations are not heavy,
the TLM directly repeat them.

\subsubsection{Endorsing Model Updates}

A TLM should endorse the model updates computed by its co-located UW 
as long as the UW is not found dishonest.
The CS only accepts a UW's model updates that have been endorsed by its co-located TLM;
a UW that fails to provide endorsed model updates 
is not allowed to get the current global model from the CS
and thus is evicted from the federated learning system. 

In our scheme, 
during the course of backward propagation,
the TLM tests the model updates made by the UW;
if the test succeeds, it signs the updates to endorse,
and the signature can be verified by the CS.

\comment{
the output of the layer $l-1$, which has $n_{l-1}$ elements,
should be multiplied with matrix $(\Theta^{(l)})^\intercal$
to obtain the $n_l$-element input of layer $l$.
Here, each of the $n_l$ input elements of layer $l$
is computed as  

In the course of training a neural network model 
or using the model for prediction, 
for each layer of the model, 
the same type of operation may need to be performed 
over a set of different elements.
To efficiently verify if these operations have been performed honestly,
we can apply a game theoretic approach to conduct selective testing. 
}

\comment{
The limitation of the basic version is that, 
for multiplication operation, 
the selective testing has the complexity of $O(n)$ where $n$ is the size of input or output. 
Hence, we further enhance the basic version by
introducing a more efficient selective testing algorithm for multiplication operations.
}

\subsection{Game-theoretic Analysis of Selective Testing}

We model the interactions between the CS and each UW as 
an infinite extensive game with perfect information, denoted as $G=(P,A,U)$. 
Here, $P=\{CS.TLM,UW\}$ is the set of players
where $CS.TLM$ represents the coalition including CS and TLM. 
$A$ is the set of actions taken by the players, including
all the combinations of the $n$ same-type computations to fake and
all the combinations of the $n$ computations to test.
As we treat the $n$ computations equally, 
the action set that the UW can take is denoted as $A_{uw}=\{0,1,\cdots,n\}$ 
where each element represents the number of computations 
that the UW randomly chooses to fake;
the action set that the CS can take is denoted as $A_{cs.tlm}=\{0,1,\cdots,n\}$
where each element represents the number of computations
that the CS.TLM randomly chooses to test.
$U=\{U_{uw},U_{cs.tlm}\}$ is the players' utility functions. 

The UW's utility %
is defined as:
\begin{eqnarray}
&&U_{uw}(A_{uw},A_{cs.tlm}) \\\nonumber
&=&    \begin{cases}
        B - (c_c(n) - c_c(A_{uw})) &\text{if not detected;} \\
        -d - (c_c(n) - c_c(A_{uw})) &\text{if detected.}
    \end{cases}
\end{eqnarray}
It says that, 
if none of the $A_{uw}$ faked computations is detected,
the UW's utility is $B - (c_c(n) - c_c(A_{uw}))$,
where $B$ is the UW's benefit from sharing the results of federated learning 
(by staying in the system) and
$c_c(x)$ is the cost of honestly executing all the $x$ computations.
Note that, here we assume that faking a computation does not have computation cost,
thus the computation cost is $c_c(n) - c_c(A_{uw})$ when
$A_{uw}$ of the $n$ computations are faked. 
If any of the $A_{uw}$ faked computations is detected,
the UW loses its deposit; hence, its utility becomes
$-d - (c_c(n) - c_c(A_{uw}))$.

Similarly, the CS.TLM's utility is defined as:
\begin{eqnarray}
&& U_{cs.tlm}(A_{uw},A_{cs.tlm}) \\\nonumber
&=&    \begin{cases}
        B' - c_t(A_{cs.tlm}) &\text{$A_{uw}=0$;} \\
        - c_t(A_{cs.tlm}) + d &\text{$A_{uw}>0$ and detected;}\\
        - Penalty - c_t(A_{cs.tlm}) &\text{$A_{uw}>0$ and not detected.}
    \end{cases}
\end{eqnarray}
If there is no faked computation (i.e., $A_{uw}=0$),
the CS.TLM's utility is $B' - c_t(A_{cs.tlm})$
where $B'$ is the benefit from having the UW in federated learning
and $C_t(x)$ is the cost for detecting $x$ randomly-selected computations.
If there is faked computation and it is detected,
the CS.TLM takes the UW's deposit and thus its utility is 
$- c_t(A_{cs.tlm}) + d$. 
If none of the $A_{uw}$ faked computation is detected,
the CS.TLM is penalized by $Penalty$ for the failure in detection and thus 
its utility is $- Penalty - c_t(A_{cs.tlm})$.

In the game, the goal of the CS and TLM coalition is to 
enforce an economically-greedy UW to execute all $n$ computations honestly. 
The following theorem states the conditions for the goal to be attained.

\begin{theorem}
For an economically-greedy untrusted local worker (UW) who aims to maximize its utility,
if the CS and TLM coalition's testing probability $\frac{A_{cs.tlm}}{n}>\frac{1}{n}$ (i.e., $A_{cs.tlm}>1$) 
and the UW's deposit $d\geq\frac{c}{1-e^{-(A_{cs.tlm}-1)}}$,
where $c$ is the cost for executing all the $n$ computations,
the UW should honestly execute all the $n$ computations.
\end{theorem}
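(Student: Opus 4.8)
The plan is to fix the coalition's strategy at an arbitrary testing level $A_{cs.tlm}=t$ with $t>1$ and to prove that, once $d$ meets the stated lower bound, the UW's expected utility is maximized by $A_{uw}=0$. First I would write the UW's expected payoff as a function of the number $k=A_{uw}$ of faked computations. Since the TLM's $t$ tested indices form a uniformly random $t$-subset of $[n]$ and the UW's $k$ faked indices form a uniformly random $k$-subset, a fake is caught precisely when the two subsets intersect; hence the ``not-detected'' probability is the hypergeometric quantity $p_k=\binom{n-k}{t}/\binom{n}{t}$ and the detection probability is $q_k=1-p_k$. Substituting into $U_{uw}$ and taking expectations gives
\[
\mathbb{E}[U_{uw}(k)] \;=\; p_k\,B \;-\;\bigl(c_c(n)-c_c(k)\bigr)\;-\;q_k\,d ,
\]
so that $\mathbb{E}[U_{uw}(0)]-\mathbb{E}[U_{uw}(k)] = q_k(B+d)-c_c(k)$. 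Therefore ``honest play is the UW's best response'' is equivalent to $q_k(B+d)\ge c_c(k)$ for every $k\in[n]$, and since the participation benefit $B$ is nonnegative it suffices to establish the stronger inequality $d\ge c_c(k)/q_k$ for all $k$.

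Next I would bound $c_c(k)/q_k$ uniformly in $k$. Because the $n$ operations of a SIMD stage are of the same type and scale, the expected cost of the $k$ (randomly chosen) faked ones is $c_c(k)=(k/n)\,c$ with $c=c_c(n)$; and from $p_k=\binom{n-k}{t}/\binom{n}{t}\le(1-t/n)^{k}$ one gets $q_k\ge 1-(1-t/n)^{k}$. Hence it suffices to check $d\ge f(k)$ for all $k\in[n]$, where $f(k)=\dfrac{(k/n)\,c}{\,1-(1-t/n)^{k}\,}$. The key step is to show $f$ is nondecreasing on $\{1,\dots,n\}$; writing $r=1-t/n\in(0,1)$, the inequality $f(k)\le f(k+1)$ reduces, after clearing denominators, to $r^{k}\bigl(k(1-r)+1\bigr)\le 1$, which follows from Bernoulli's inequality applied to $(1/r)^{k}$. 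Consequently $\max_{k\in[n]}f(k)=f(n)=\dfrac{c}{\,1-(1-t/n)^{n}\,}$, and since $(1-t/n)^{n}\le e^{-t}\le e^{-(t-1)}$ we get $f(n)\le\dfrac{c}{1-e^{-(t-1)}}$. Thus $d\ge\dfrac{c}{1-e^{-(A_{cs.tlm}-1)}}$ forces $\mathbb{E}[U_{uw}(0)]\ge\mathbb{E}[U_{uw}(k)]$ for every $k$, i.e.\ honestly executing all $n$ computations maximizes the UW's utility (and strictly, since $d$ strictly exceeds $\max_{k}c_c(k)/q_k$).

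I expect the main obstacle to be the worst-case analysis over $k$ --- specifically, ruling out that faking only a \emph{few} computations is profitable. This is exactly where the monotonicity of $f$ does the work, and where the homogeneity of the $n$ SIMD operations (together with the UW choosing the faked ones at random) is essential: without it a UW could fake a single disproportionately expensive computation and the bound $c_c(k)\le(k/n)c$ would fail. A secondary and more routine point is the passage from the exact, $n$-dependent detection guarantee to the $n$-free closed form in the statement: the argument yields $d\ge c/(1-(1-A_{cs.tlm}/n)^{n})$, which is then relaxed to the stated $d\ge c/(1-e^{-(A_{cs.tlm}-1)})$; it is this relaxation that makes the hypothesis $A_{cs.tlm}>1$ (testing probability strictly above $1/n$) appear, since the relaxed bound diverges at $A_{cs.tlm}=1$.
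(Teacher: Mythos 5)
Your proof is correct, and it reaches the theorem by a genuinely different route than the paper. The paper splits the range of faked computations $m$ into two regimes: for $1\le m\le(1-\frac1p)n$ it shows by induction that the detection probability $1-(1-\frac pn)^m$ exceeds $\frac mn$, so the expected deposit loss $\frac mn d$ beats the saving $\frac mn c$; for $m>(1-\frac1p)n$ it bounds the detection probability below by $1-e^{-(p-1)}$, so the expected loss is at least $c$, the maximum possible saving. You instead prove the single inequality $d\ge f(k)$ for all $k$, where $f(k)=\frac{(k/n)c}{1-(1-t/n)^k}$, by showing $f$ is nondecreasing (your reduction to $r^k(k(1-r)+1)\le 1$ and the Bernoulli step both check out), so only the endpoint $k=n$ needs the closed-form relaxation $(1-t/n)^n\le e^{-t}\le e^{-(t-1)}$. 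Your version buys several things: it identifies $k=n$ as the binding case and makes the two-regime split unnecessary; it starts from the exact hypergeometric non-detection probability $\binom{n-k}{t}/\binom{n}{t}$ and justifies $(1-t/n)^k$ as an upper bound on it, whereas the paper simply posits the independent-sampling form; it writes out the full expected-utility comparison $\mathbb{E}[U(0)]-\mathbb{E}[U(k)]=q_k(B+d)-c_c(k)$, making explicit the (harmless) role of the participation benefit $B$; and it flags that the homogeneity assumption $c_c(k)=\frac knc$ is load-bearing. What the paper's two-case argument buys in exchange is the intuition that in the few-fakes regime the per-computation expected penalty already exceeds the per-computation saving, while in the many-fakes regime detection is essentially certain --- but as a proof your unified monotonicity argument is tighter and, if anything, more careful.
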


\begin{proof}
We let $p=A_{cs.tlm}$ and $m=A_{uw}$ for convenience.
In the proof, 
we consider two separate cases: 
$1\leq m\leq (1-\frac{1}{p})n$ and $(1-\frac{1}{p})n<m\leq n$.\\

{\noindent \underline{Case I: $1\leq m\leq (1-\frac{1}{p})n$}}.
For this case, we prove by induction that probability for successful detection, 
i.e., $1-(1-\frac{p}{n})^m$, is at least $\frac{m}{n}$. 
That is:
\begin{equation}
    1-(1-\frac{p}{n})^m > \frac{m}{n}.
\end{equation}

{\em Base Case}. When $m=1$, 
\begin{equation}
    1-(1-\frac{p}{n})^m = \frac{p}{n} > \frac{1}{n} = \frac{m}{n}.
\end{equation}

{\em Inductive Step}. 
For any $1\geq m_0 \leq (1-\frac{1}{p})n-1$, 
we prove in the following that 
$1-(1-\frac{p}{n})^{m_0+1} > \frac{m_0+1}{n}$ 
as long as 
$1-(1-\frac{p}{n})^{m_0} > \frac{m_0}{n}$:

\begin{eqnarray}\small
&& (1-\frac{p}{n})^{m_0+1} %
= (1-\frac{p}{n})^{m_0}\cdot (1-\frac{p}{n}) \nonumber\\
&<& (1-\frac{m_0}{n})\cdot (1-\frac{p}{n}),~\mbox{for inductive assumption} \nonumber\\
&<& 1-\frac{m_0}{n} - \frac{1}{p}\cdot\frac{p}{n},~\mbox{because}~m_0\leq n(1-\frac{1}{p}) \nonumber\\
&=& 1-\frac{m_0+1}{n}.
\end{eqnarray}
Hence,
$1-(1-\frac{p}{n})^{m_0+1} > \frac{m_0+1}{n}$.

Given the above probability of successful detection,
the expected loss of deposit due to $m$ dishonest computations is at least
\begin{equation}
    \frac{m}{n}\cdot d \geq \frac{m}{n}\cdot \frac{c}{1-e^{-(p-1)}} > \frac{m\cdot c}{n}.
\end{equation}
That is, it is greater than the cost 
that can be saved by the UW who conducts $m\leq (1-\frac{1}{p})n$ dishonest computations. \\

{\noindent \underline{Case II: $(1-\frac{1}{p})n<m\leq n$}}. 
The probability can be derived as follows.

\begin{eqnarray}\small
&& 1-(1-\frac{p}{n})^m %
= 1-(1-\frac{p}{n})^{\frac{n}{p}\cdot\frac{m\cdot p}{n}} \nonumber\\
&>& 1- e^{-m\cdot\frac{p}{n}},~\mbox{because}~\forall x>0,~(1-\frac{1}{x})^x<e^{-1} \nonumber\\
&>& 1- e^{-n(1-\frac{1}{p})\frac{p}{n}},~\mbox{because}~m> (1-\frac{1}{p})n \nonumber\\
&=& 1- e^{-(p-1)}.
\end{eqnarray}

Given the above probability of successful detection,
the expected loss of deposit due to $m$ dishonest computations is at least
\begin{equation}
    (1- e^{-(p-1)})\cdot d \geq c.
\end{equation}
That is, it is greater than the cost 
that can be saved by the UW who fakes $m> (1-\frac{1}{p})n$ computations.

\end{proof}

{\noindent\bf{Remarks:}}~
Based on the above theorem, letting $p=2$, 
the UW is only required to make a deposit of $\frac{c}{1-e^{-1}}<2c$
and the TLM only needs to test 2 of the $n$ operations. 
When applying our proposed scheme,
$c$ is the maximal cost for executing any stage of the procedure
of training a neural network model,
which is small in practice.
Hence, our proposed scheme is practical.

\comment{
Let $p$ be the probability for a computation to be selected for verification,
$n$ the total number of computations, and
$m$ the number of dishonest computations.
The probability for detection is
$1-(1-p)^m$. 
Then, we have the following:
as long as $p\geq \frac{2}{n}$,
it holds that $1-(1-p)^m \geq \frac{m}{n}\cdot (1-e^{-2})$.
Thus, as long as the deposit made by each client is $c\cdot\frac{1}{1-e^{-2}}$,
where $c$ is the cost of conducting the $n$ computations honestly,
a economically-rational client will not behave dishonestly. 

\begin{proof}

We consider two cases: $m\leq\frac{n}{2}$ and $m>\frac{n}{2}$. 

Case I: $m\leq\frac{n}{2}$. The probability for detection $1-(1-p)^m$ 
can be expanded with the following two cases depending on
whether $m$ is even or odd.
When $m$ is even, is can be expanded to 
\begin{eqnarray}
&&\frac{m\cdot p}{2} \\ 
&+& (\frac{m\cdot p}{2} - \binom{m}{2}\cdot p^2) \\ 
&+& \sum_{i=1}^{\frac{m}{2}-1}(\binom{m}{2i+1}\cdot p^{2i+1} - \binom{m}{2i+2}\cdot p^{2i+2}).    
\end{eqnarray}
Otherwise, the above can be expanded to 
\begin{eqnarray}
&&\frac{m\cdot p}{2} \\ 
&+& (\frac{m\cdot p}{2} - \binom{m}{2}\cdot p^2) \\ 
&+& \sum_{i=1}^{\frac{m-1}{2}-1}(\binom{m}{2i+1}\cdot p^{2i+1} - \binom{m}{2i+2}\cdot p^{2i+2})\\    
&+& \binom{m}{m}\cdot p^{m}.
\end{eqnarray}
Here, 
\begin{eqnarray}
&& \frac{m\cdot p}{2} - \binom{m}{2}\cdot p^2 \\
&=& \frac{m}{n} \cdot (1-\frac{4(m-1)}{2n}) \\
&>& 0,~\mbox{because }m\leq\frac{n}{2}.
\end{eqnarray}
Also, for each $i$,
\begin{eqnarray}
&& \binom{m}{2i+1}\cdot p^{2i+1} - \binom{m}{2i+2}\cdot p^{2i+2} \\
&=& \binom{m}{2i+1}\cdot p^{2i+1}\cdot (1 - \frac{m-2i-1}{2i+2}\cdot p) \\
&>& 0, 
\end{eqnarray}

\end{proof}
}

\section{Enhancements for Higher Efficiency}

The general model presented in Section~\ref{sec:select-test-general-model}
could be directly applied for testing the procedures of 
forward or backward propagation through a convolutional or fully-connected layer. 
However, when the sizes of inputs and/or outputs are large, 
the costs for the UW to compute the Merkle trees and 
for the TLM to conduct selective testing could be very high. 
To address this problem,
we propose enhanced methods to attain higher efficiency.
Due to space limit, in the following
we present only 
backward propagation through a convolutional layer.
The techniques used here have also be applied to enhance the other procedures.

During the backward propagation through a convolutional layer,
the UW could compute the gradients for $\vec{X}$ 
according to Equation~(\ref{eq:conv-backward})
based on $\overrightarrow{\nabla Y}$ and the filters;
compute the updates to every filter 
according to Equation~(\ref{eq:conv-update})
based on $\overrightarrow{\nabla Y}$ and $\vec{X}$.
To facilitate selective testing efficiently,
however, we propose to make changes to the above procedure. 
We elaborate the new algorithms for computation and 
selective testing in the following.

\subsection{Computation by the UW}

To facilitate selective test,
the TLM should record some intermediate results of the computations 
and construct several Merkel hash trees.  
After the trees have been constructed,
the UW send their commitments to the TLM before 
the TLM conducts selective testing.

\subsubsection{Computing $\protect\overrightarrow{\nabla X}$ and $Tree(\protect\overrightarrow{\nabla X})$}

For each filter $\overrightarrow{F^{(t)}}$ with $t\in[n_F]$, 
the TLM computes matrix $\overrightarrow{\nabla X^{(t)}}$
based on Equation~(\ref{eq:conv-backward-1}) and then
computes matrix $\overrightarrow{\nabla X}$ based on Equation~(\ref{eq:conv-backward}).
The TLM records 
The resulting $\{\overrightarrow{\nabla X^{(t)}}|t\in[n_R]\}$ and 
$\overrightarrow{\nabla X}$, and further use them 
to construct a Merkle hash tree $Tree(\overrightarrow{\nabla X})$ as follows.
For each $i,j\in[\alpha_X]$,
the TLM computes 
$h(\overrightarrow{\nabla X},i,j)=hash(\overrightarrow{\nabla X^{(1)}}[i,j],\cdots,\overrightarrow{\nabla X^{(n_F)}}[i,j])$,
uses all of these hash values as leaf nodes to construct $Tree(\overrightarrow{\nabla X})$,
and uses the root of the tree as commitment $comm(\overrightarrow{\nabla X})$.

\subsubsection{Computing 
$\protect\overrightarrow{\nabla F^{(t)}}$ and $Tree(\protect\overrightarrow{\nabla F})$}

To record some intermediate results of computing $\overrightarrow{\nabla F^{(t)}}$, 
Each element of it, i.e., 
$\overrightarrow{\nabla F^{(t)}}[i,j]$ for every $i,j\in[\alpha_F]$, 
is expanded to a $\alpha_Y$-element vector
denoted as $\overrightarrow{\nabla F^{(t)}_{i,j}}$ and
each element of the vector, %
which is denoted as $\overrightarrow{\nabla F^{(t)}_{i,j}}[u]$ for $u\in[\alpha_Y]$,
is computed as
\begin{equation}\label{eq:nabla_Ft_iju}
    \overrightarrow{\nabla F^{(t)}_{i,j}}[u]=
        \sum_{v\in[\alpha_Y]} \overrightarrow{\nabla Y^{(t)}}[u,v]\cdot \vec{X}[(u-1)\alpha_F+i,(v-1)\alpha_F+j].
\end{equation}
Then, $\overrightarrow{\nabla F^{(t)}}[i,j]$ is computed as
\begin{equation}
    \overrightarrow{\nabla F^{(t)}}[i,j] = 
    \sum_{u\in [\alpha_Y]}\overrightarrow{\nabla F^{(t)}_{i,j}}[u].
\end{equation}
For each vector $\overrightarrow{\nabla F^{(t)}_{i,j}}$ with $t\in [n_F]$ and $i,j\in[\alpha_Y]$,
the hash of all its elements is computed.
Then, such hash values for every $t$, $i$ and $j$
are used as leaf node to construct Merkel hash tree
$Tree(\overrightarrow{\nabla F})$.

\subsubsection{Constructing 
$Tree(\protect\overrightarrow{\nabla Y})$ and $Tree'(\protect\overrightarrow{X})$}

Based on how $\overrightarrow{\nabla Y}$ and $\overrightarrow{X}$
are used in Equation~(\ref{eq:nabla_Ft_iju}),
the TLM further constructs the following two Merkel hash trees.

For each $t\in[n_F]$, a hash value is computed for each of 
the $\alpha_Y$ rows of $\overrightarrow{\nabla Y^{(t)}}$.
Then, all these $\alpha_Y\cdot n_F$ hash values are used as leaf nodes
to construct Merkel hash tree $Tree(\overrightarrow{\nabla Y})$.

For each $i,j\in[\alpha_F]$ and each $u\in[\alpha_Y]$,
we define a vector denoted as $\vec{X}_{i,j,u}$ that includes 
the following elements of $\vec{X}$: 
$\vec{X}[(u-1)\alpha_F+i,(v-1)\alpha_F+j]$ for every $v\in[\alpha_Y]$.
The hash for all the elements in $\vec{X}_{i,j,u}$, 
denoted as $h(\vec{X}_{i,j,u})$, is computed.
Then, all of the above hash values  
$h(\vec{X}_{i,j,u})$ for every $u\in[n_F]$
are used as leaf nodes to construct Merkel hash tree $Tree'(\overrightarrow{X})$.

\subsection{Selective Testing by the TLM}

The TLM issues a request to the UW for pointers to the memory where 
$\vec{X}$, 
$\overrightarrow{\nabla Y}$, 
$\overrightarrow{\nabla F}$, 
$\overrightarrow{\nabla X}$,
$\overrightarrow{\nabla X^{(t)}}$ for every $t\in[n_F]$,
$Tree'(\vec{X})$, 
$Tree(\overrightarrow{\nabla Y})$,
$Tree(\overrightarrow{\nabla F})$, and 
$Tree(\overrightarrow{\nabla X})$ 
are stored.
After receiving the information,
the TLM selectively tests the computations of
$\overrightarrow{\nabla X}$ and 
$\overrightarrow{\nabla F}$
as follows.
Note that, the testings use the filters and 
we assume the filters, 
due to their small size,
are kept in the trusted memory space of the TLM.

\subsubsection{Selectively Testing $\protect\overrightarrow{\nabla X}$}

The TLM randomly selects $p$, 
which should be greater than $1$ according to Theorem 1, 
elements of $\overrightarrow{\nabla X}$ to test.
For each selected element,
denoted as $\overrightarrow{\nabla X^{(t)}}[i,j]$ for 
certain $t\in[n_R]$ and $i,j\in[\alpha_X]$, 
the testing is as follows.

{\noindent\bf{Test~1}}:~
$\overrightarrow{\nabla X^{(t)}}[i,j]$ 
is validated based on $Tree(\overrightarrow{\nabla X})$ and 
the commitment $comm(Tree(\overrightarrow{\nabla X}))$ 
that the TLM has received earlier from the UW.
Specifically, the TLM computes  
$h=hash(\overrightarrow{\nabla X^{(1)}}[i,j],\cdots,\overrightarrow{\nabla X^{(n_F)}}[i,j])$;
checks whether $h$ is equal to the leaf node of $Tree(\vec{X})$
at the position (denoted as $index$) corresponding to element $\overrightarrow{\nabla X^{(t)}}[i,j]$;
retrieves the co-path values of the leaf node from $Tree(\vec{X})$ to form evidence $evid$;
and finally calls $Verify\_Commit(h,index,comm(Tree(\overrightarrow{\nabla X})),evid)$ 
to verify the validity.

{\noindent\bf{Test~2}}:~
The TLM identifies the elements of $\overrightarrow{\nabla Y^{(t)}}$ 
that are used in computing $\overrightarrow{\nabla X^{(t)}}[i,j]$ 
and validates these elements based on 
$Tree(\overrightarrow{\nabla Y})$ and 
the commitment $comm(Tree(\overrightarrow{\nabla Y}))$ received earlier.
According to Equation~(\ref{eq:conv-backward-1}), 
these elements include every $\overrightarrow{Y^{(t)}}[u,v]$ such that 
$u,v\in[\alpha_Y]$ and $i-u,j-v\in[n_F]\}$.
Also, these elements belong to every row $r$ of matrix $\overrightarrow{Y^{(t)}}$
such that $u\in[\alpha_Y]$ and $i-u\in[\alpha_F]$;
note that, the number of such rows is at most $\alpha_F$.
Thus, the TLM should retrieve all the elements in these rows. 
For each row $u$, it computes the hash value $h$ of all the elements in the row,
and checks if $h$ equals to the leaf node of $Tree(\overrightarrow{\nabla Y})$
with index $index$ that corresponds to the row. 
If so, the co-path hash values of the leaf are retrieved and recorded as evidence $evid$, and
$Verify\_Commit(h,index,comm(Tree(\overrightarrow{\nabla Y})),evid)$ is called to 
verify the validity of the row.

{\noindent\bf{Test~3}}:~
Lastly, the TLM re-computes $\overrightarrow{\nabla X^{(t)}}[i,j]$ 
according to Equation~(\ref{eq:conv-backward-1})
and checks if the re-computed result is equal to 
the $\overrightarrow{\nabla X^{(t)}}[i,j]$ that was already verified in Test~1.

\subsubsection{Selectively Testing $\protect\overrightarrow{\nabla F}$}

The TLM randomly selects $p$ elements of $\overrightarrow{\nabla F}$ to test.
For each selected element,
denoted by $\overrightarrow{\nabla F^{(t)}_{i,j}}[u]$ for 
some $t\in[n_R]$ and $i,j,u\in[\alpha_F]$, 
the testing is as follows.

{\noindent\bf{Test~1}}:~
$\overrightarrow{\nabla F^{(t)}_{i,j}}[u]$ is validated 
based on $Tree(\overrightarrow{\nabla F})$ and commitment $commit(Tree(\overrightarrow{\nabla F}))$.
Specifically, 
the hash $h$ of all elements in vector $\overrightarrow{\nabla F^{(t)}_{i,j}}$ is computed;
the index $index$ of the leaf node corresponding to $h$ in $Tree(\overrightarrow{\nabla F})$ is identified;
the co-path values for the leaf node are identified in $Tree(\overrightarrow{\nabla F})$ to form evidence $evid$;
$Verify\_Commit(h,ind,comm(\overrightarrow{\nabla F},evid)$ is called to 
verify the validity of $\overrightarrow{\nabla F^{(t)}_{i,j}}$ and thus 
the validity of its element $\overrightarrow{\nabla F^{(t)}_{i,j}}[u]$.

{\noindent\bf{Test~2}}:~
$\overrightarrow{\nabla Y^{(t)}}[u]$, 
which is the row of matrix $\overrightarrow{\nabla Y^{(t)}}$ 
used in computing $\overrightarrow{\nabla F^{(t)}_{i,j}}[u]$ according to Equation~(\ref{eq:nabla_Ft_iju}),
is validated based on $Tree(\overrightarrow{\nabla F})$ and commitment $comm(Tree(\overrightarrow{\nabla F}))$. 
Specifically,
the hash $h$ of all elements in $\overrightarrow{\nabla Y^{(t)}}[u]$ is computed;
the index $index$ of the leaf node corresponding to $h$ in $Tree(\overrightarrow{\nabla Y})$ is identified;
the co-path values for the leaf node are identified in $Tree(\overrightarrow{\nabla Y})$ to form evidence $evid$;
$Verify\_Commit(h,ind,comm(Tree(\overrightarrow{\nabla F})),evid)$ is called to 
verify the validity of $\overrightarrow{\nabla Y^{(t)}}[u]$.

{\noindent\bf{Test~3}}:~
$\vec{X}_{i,j,u}$,
which is the group of elements in $\vec{X}$ that are used in computing $\overrightarrow{\nabla F^{(t)}_{i,j}}[u]$,
is validated based on $Tree'(\vec{X})$ and its commitment $comm(Tree'(\vec{X}))$. 
Specifically,
the hash $h$ of all elements in group $\vec{X}_{i,j,u}$ is computed;
the index $index$ of the leaf node corresponding to $h$ in $Tree'(\vec{X})$ is identified;
the co-path values for the leaf node are identified in $Tree'(\vec{X})$ to form evidence $evid$;
$Verify\_Commit(h,index,comm(Tree'(\vec{X})),evid)$ is called to 
verify the validity of group $\vec{X}_{i,j,u}$.

{\noindent\bf{Test~4}}:~
Lastly, the TLM re-computes $\overrightarrow{\nabla F^{(t)}_{i,j}}[u]$ according to Equation~(\ref{eq:nabla_Ft_iju})
and checks if the re-computed result equals to $\overrightarrow{\nabla F^{(t)}_{i,j}}[u]$
which is already verified in Test~1.

\section{Performance Evaluation}

For performance evaluation,
we implement our proposed new scheme on a computer with Intel SGX.
We also implement the following schemes for comparison:
    {\em Original (No-SGX) Scheme} - the untrusted server implements 
    the convoluntional and fully-connected layer functions without any security consideration.
    {\em Full-SGX Scheme} - the SGX enclave implements the convolutional and fully-connected
    layer functions. 
Note that, 
for the full-SGX scheme, 
due to limited trusted memory space,
data should be loaded from the regular memory to the enclave before being processed 
and the processing results should be stored back to the regular memory.
To ensure the integrity of the data,
a hash value of the data is computed and stored securely in enclave 
before the data is stored to the regular memory; the hash is recomputed
and compared to the stored hash when the data is re-loaded to the enclave.
    
The above three schemes are evaluated on a computer 
with Intel Core i5-8400 CPU (2.80GHz) of six cores and a RAM of 8.00GB. 
The evaluation results are presented and discussed in the following. 

\subsubsection*{Convolutional Layer: Forward Propagation}

Table~\ref{tab:cc-fwd-xsize} shows the costs of the schemes
for the forward propagation through a convolutional layer,
as the input size varies.
The original scheme's cost is denoted as {\em original fwd} and
the full-SGX scheme's cost is denoted as {\em SGX fwd}.
For our proposed scheme, 
the cost incurred at the untrusted worker is dentoed as {\em new fwd}
and the cost for selective test incurred at the SGX enclave is denoted as {\em selective test}.
All the costs are measured as the computation latency in the unit of micro-second. 
Here, %
16 filters each of size 8$\times$8 are used and the stride is set to 2.

\begin{table}[htb]\small
    \centering
    \begin{tabular}{c|c|c|c|c}
    \hline
    input size & original fwd & SGX fwd & new fwd & selective test \\\hline
    16$\times$16   & 124    & 145   & 303    & 35  \\
    32$\times$32   & 818    & 865   & 1265   & 41  \\
    64$\times$64   & 3990   & 4161  & 5242   & 59  \\
    128$\times$128 & 17705  & 18382 & 21065  & 118 \\
    256$\times$256 & 74196  & 76868 & 84815  & 360 \\\hline
    \end{tabular}
    \caption{Forward Propagation Costs for Convolutional Layer 
    (unit: micro-second): Impact of Input Size. 
    Settings: $stride=2$, $filter\_size=8\times 8$, and $filter\_number=16$.}
    \label{tab:cc-fwd-xsize}
\end{table}

As we can see from Table~\ref{tab:cc-fwd-xsize}, 
the cost of the original scheme is slightly lower than the full-SGX scheme
due to the extra overhead for ensuring data integrity.
Our new scheme introduces higher cost at the untrusted worker,
at the price of significantly reducing the cost at the SGX enclave.
The results also demonstrate that, when the input size is not small (i.e., 
greater than 32$\times$32), 
the new scheme does not increase the cost of the untrusted worker significantly 
(i.e., 1.14-1.55 times of the original scheme)
while incurring significantly lower cost at the SGX enclave 
(i.e., 0.5\%-4.7\% of the full-SGX scheme).

\begin{table}[htb]\small
    \centering
    \begin{tabular}{c|c|c|c|c}
    \hline
    filter number & original fwd & fwd by SGX & new fwd & selective test \\\hline
    4  & 4411  & 4666  & 5275  & 111 \\
    8  & 8866  & 9201  & 10577 & 117 \\
    16 & 17626 & 18587 & 20939 & 111 \\
    32 & 35367 & 36730 & 41942 & 115 \\\hline
    \end{tabular}
    \caption{Forward Propagation Costs for Convolutional Layer 
    (unit: micro-second): Impact of Output Size. %
    Settings: $stride=2$, $filter\_size=8\times8$, and $input\_size=128\times128$.}
    \label{tab:cc-fwd-fnum}
\end{table}

\begin{table}[htb]
    \centering\small
    \begin{tabular}{c|c|c|c|c}
    \hline
    stride & original fwd & fwd by SGX & new fwd & selective test \\\hline
    1  & 68941 & 71575  & 78447  & 123 \\
    2  & 17626 & 18587  & 20939  & 111 \\
    4  & 4583  & 4789   & 6009   & 110 \\
    8  & 1215  & 1297   & 1857   & 107 \\\hline
    \end{tabular}
    \caption{Forward Propagation Costs for Convolutional Layer 
    (unit: micro-second): Impact of stride. 
    Settings: $filter\_size=8\times8$, $filter\_number=16$ and $input\_size=128\times128$.}
    \label{tab:cc-fwd-stride}
\end{table}

\begin{table}[htb]
    \centering
    \begin{tabular}{c|c|c|c|c}
    \hline
    filter size & original fwd & fwd by SGX & new fwd & selective test \\\hline
    8$\times$8   & 17708   & 18449   & 21117   & 113  \\
    16$\times$16 & 61104   & 53713   & 64606   & 132  \\
    32$\times$32 & 179681  & 149620  & 182581  & 149  \\
    64$\times$64 & 319404  & 268115  & 320935  & 152  \\\hline
    \end{tabular}
    \caption{Forward Propagation Costs for Convolutional Layer 
    (unit: micro-second): Impact of filter size. 
    Settings: $stride=2$, $filter\_number=16$ and $input\_size=128\times128$.}
    \label{tab:cc-fwd-fsize}
\end{table}

Similar trends have been demonstrated in Tables~\ref{tab:cc-fwd-fnum},
\ref{tab:cc-fwd-stride}, and \ref{tab:cc-fwd-fsize},
where the costs incurred by the three schemes are presented
as the number/size of the filters or the stride changes.
Specifically, 
the costs of the original and the full-SGX schemes are similar,
the new scheme introduces a slightly higher cost at the untrusted worker
(i.e., 1.01-1.53 times of the original scheme)
and incurs much lower cost at the SGX enclave
(i.e., 0.1\%-8.2\% of the full-SGX scheme).

\subsubsection*{Convolutional Layer: Backward Propagation}

Table~\ref{tab:cc-bwd-xsize} shows the costs of the three schemes
for backward propagation through a convolutional layer,
as the size of the input varies from $16\times16$ to $256\times256$.
According to the table, 
the full-SGX scheme's cost (denoted as {\em SGX bwd}) is higher than (i.e., about twice of) 
the original scheme's cost (denoted as {\em original bwd}),
because the full-SGX scheme needs to load and check the integrity of 
the inputs and outputs of the layer.
The new scheme's cost at the untrusted worker (denoted as {\em new bwd}) is also high
because the worker needs to construct large Merkle hash trees 
to facilitate selective testing. Specifically, 
when the input size is not large (i.e., $32\times32$ or smaller),
the cost at the worker is as high as 5-24 times of the original scheme's cost.
However, when the input size becomes larger than $64\times 64$, 
the worker's cost becomes only 1.5-2.3 times of the original scheme's cost.
Particularly, the worker's cost is even smaller than the full-SGX scheme's cost
when the input size is $128\times 128$ or larger. 
The new scheme's cost at the SGX enclave (denoted as {\em selective test}) remains the smallest;
it is 6-28\% of the full-SGX scheme's cost when the input size is no greater than $32\times 32$
and only 0.3-1.6\% of the full-SGX scheme's cost when the input size is $64\times64$ or larger. 

\begin{table}[htb]
    \centering
    \begin{tabular}{c|c|c|c|c}
    \hline
    input size & original bwd & SGX bwd & new bwd & selective test \\\hline
    16$\times$16   & 188    & 398    & 4611   & 112  \\
    32$\times$32   & 1249   & 2516   & 6254   & 139  \\
    64$\times$64   & 6058   & 12049  & 13768  & 191  \\
    128$\times$128 & 26771  & 53107  & 43840  & 305 \\
    256$\times$256 & 112449 & 223482 & 173460 & 658 \\\hline
    \end{tabular}
    \caption{Backward Propagation Costs for Convolutional Layer 
    (unit: micro-second): Impact of Input Size. 
    Settings: $stride=2$, $filter\_size=8X8$, and $filter\_number=16$.}
    \label{tab:cc-bwd-xsize}
\end{table}

\begin{table}[htb]
    \centering
    \begin{tabular}{c|c|c|c|c}
    \hline
    filter number & original bwd & SGX bwd & new bwd & selective test \\\hline
    4  & 6702       &    13724&          11560&          228 \\
    8  & 13426      &    26874&          22327&          253 \\
    16 & 26765      &    52957&          43884&          292 \\
    32 & 53520      &    105403&         93123&          389 \\\hline
    \end{tabular}
    \caption{Backward Propagation Costs for Convolutional Layer 
    (unit: micro-second): Impact of $filter\_number=16$. 
    Settings: $stride=2$, $filter\_size=8X8$, and $input\_size=128\times128$.}
    \label{tab:cc-bwd-fnum}
\end{table}

\begin{table}[htb]
    \centering
    \begin{tabular}{c|c|c|c|c}
    \hline
    stride & original bwd & SGX bwd & new bwd & selective test \\\hline
    1  & 104705&         206806&         130990&         367 \\
    2  & 26765&          52957&          43884&          292 \\
    4  & 6994&           14308&          21471&          272 \\
    8  & 1868&           4270&           15292&          259 \\\hline
    \end{tabular}
    \caption{Backward Propagation Costs for Convolutional Layer 
    (unit: micro-second): Impact of stride. 
    Settings: $filter\_size=8X8$, $filter\_number=16$ and $input\_size=128\times128$.}
    \label{tab:cc-bwd-stride}
\end{table}

\begin{table}[htb]
    \centering
    \begin{tabular}{c|c|c|c|c}
    \hline
    filter size & original bwd & SGX bwd & new bwd & selective test \\\hline
    8$\times$8   & 26777&          53085&          43984&          291  \\
    16$\times$16 & 92652&          184529&         143743&         342  \\
    32$\times$32 & 269609&         541327&         447952&         420  \\
    64$\times$64 & 488914&         981685&         1037886&        446  \\\hline
    \end{tabular}
    \caption{Backward Propagation Costs for Convolutional Layer 
    (unit: micro-second): Impact of filter size. 
    Settings: $stride=2$, $filter\_number=16$ and $input\_size=128\times128$.}
    \label{tab:cc-bwd-fsize}
\end{table}

Similar trends can be observed in Tables~\ref{tab:cc-bwd-fnum}, 
\ref{tab:cc-bwd-stride} and \ref{tab:cc-bwd-fsize},
where the schemes' costs are compared 
as the number/size of filters or the stride changes
but the input size is fixed at $128\times 128$. 
Specifically, the new scheme's cost at the untrusted worker ranges between 1.25-3.07 times of the original scheme's cost,
except that the cost is 8.19 times of the original scheme's cost when the filter size is $8\times 8$ and stride is $8$,
in which case the original scheme's workload is small because the stride is large relative to the filter size.
The new scheme's cost at the SGX enclave remains low; specifically,
it ranges between 0.05-6.1\% of the full-SGX scheme's cost.

\subsubsection*{Fully-connected Layer: Forward Propagation}

\begin{table}[htb]
    \centering
    \begin{tabular}{c|c|c|c|c}
    \hline
    input size  & original fwd & SGX fwd & new fwd & selective test \\\hline
    32&     6&              114&            134&            33  \\
    64&     11&             217&            148&            37  \\
    128&    21&             410&            144&            36  \\
    256&    43&             781&            201&            36  \\
    512&    92&             1563&           270&            38  \\
    1024&   205&            3030&           413&            37  \\
    2048&   481&            6050&           715&            39  \\
    4096&   788&            12128&          1196&           41  \\\hline
    \end{tabular}
    \caption{Forward Propagation Costs for Fully-connected Layer 
    (unit: micro-second): Impact of Input Size.
    Here the output size is $64$.}
    \label{tab:cc-fwd-input}
\end{table}

\begin{table}[htb]
    \centering
    \begin{tabular}{c|c|c|c|c}
    \hline
    output size & original fwd & SGX fwd & new fwd & selective test \\\hline
    4096&   191713&         1600367&        249100&         56  \\
    2048&   102136&         786005&         128956&         56  \\
    1024&   51145&          383069&         65157&          53  \\
    512&    25597&          183947&         33951&          60  \\
    256&    12820&          96058&          16649&          55  \\
    128&    4529&           32506&          5793&           56  \\
    64&     1245&           12599&          1443&           48  \\
    32&     479&            6251&           636&            43  \\
    16&     208&            3092&           284&            43  \\\hline
    \end{tabular}
    \caption{Forward Propagation Costs for Fully-connected Layer 
    (unit: micro-second): Impact of Output Size.
    Here the input size is $4096$.}
    \label{tab:cc-fwd-output}
\end{table}

Tables~\ref{tab:cc-fwd-input} and \ref{tab:cc-fwd-output} show 
the costs of the three schemes for
the forward propagation through a fully-connected layer,
as the input and output sizes vary. 
As we can see, the full-SGX scheme has higher cost than 
the original cost due to the extra overheads 
for loading and verifying the integrity of the weight matrix, 
the size of which increases along with the input or output size,
and for computing the hash of the outputs. 

Except for the cases when the input and output sizes are small (e.g., 
input size is no greater than 32 and the output size is no greater than 64), 
the new scheme has lower cost at the untrusted worker than the full-SGX scheme. 
Specifically, the untrusted worker's cost ranges between 9-69\% of the full-SGX scheme's cost. 

The new scheme's cost at the SGX enclave (i.e., selective test) remains the smallest.
Table~\ref{tab:cc-fwd-input} shows that,
as the output size increases from $32$ to $4096$,
the cost for selective test increases only slightly from $33$ to $41$ micro-seconds
while the full-SGX scheme's cost increases by $106$ times;
therefore, the cost for selective test changes from 29\% to 0.3\% of 
the full-SGX scheme's cost. 
Similarly, Table~\ref{tab:cc-fwd-output} shows that,
as the output size increases from $16$ to $4096$,
the cost for selective test increases only slightly from $43$ to $56$ micro-seconds
while the full-SGX scheme's cost increases by $517$ times;
therefore, the cost for selective test changes from 1.3\% to 0.003\% of
the full-SGX scheme's cost.

\subsubsection*{Fully-connected Layer: Backward Propagation}

\begin{table}[htb]
    \centering
    \begin{tabular}{c|c|c|c|c}
    \hline
    input size  & original bwd & SGX bwd & new bwd & selective test \\\hline
    32&     10&             198&            202&            33  \\
    64&     20&             388&            393&            37  \\
    128&    39&             752&            659&            36  \\
    256&    74&             1484&           1141&           41  \\
    512&    151&            2955&           2468&           45  \\
    1024&   293&            5824&           4132&           50  \\
    2048&   585&            11609&          8305&           70  \\
    4096&   1178&           23215&          16437&          102  \\\hline
    \end{tabular}
    \caption{Backward Propagation Costs for Fully-connected Layer 
    (unit: micro-second): Impact of Input Size.
    Here the output size is $64$.}
    \label{tab:cc-bwd-input}
\end{table}

\begin{table}[htb]
    \centering
    \begin{tabular}{c|c|c|c|c}
    \hline
    output size & original bwd & SGX bwd & new bwd & selective test \\\hline
    4096&   80876&          1461079&        503361&         152  \\
    2048&   40225&          730257&         257843&         154  \\
    1024&   20426&          364780&         134487&         150  \\
    512&    10620&          186273&         73545&          113  \\
    256&    5234&           93183&          41460&          110  \\
    128&    2706&           46746&          24964&          108  \\
    64&     1444&           23394&          17034&          108  \\
    32&     701&            11780&          12222&          106  \\
    16&     298&            6028&           10163&          103  \\\hline
    \end{tabular}
    \caption{Backward Propagation Costs for Fully-connected Layer 
    (unit: micro-second): Impact of Output Size.
    Here the input size is $4096$.}
    \label{tab:cc-bwd-output}
\end{table}

Tables~\ref{tab:cc-bwd-input} and \ref{tab:cc-bwd-output} show 
the costs of the three schemes for
the backward propagation through a fully-connected layer,
as the input and output sizes vary. 
The trends are similar to those shown in 
Tables~\ref{tab:cc-fwd-input} and \ref{tab:cc-fwd-output}.

\comment{
Plan for evaluation:

\subsection{Cost for Validating Initial Input}

Purpose: 
to test the validity of the initial input for layer $1$.

Preparation: 
As defined above, let 
$n_1=n_X$ and $n_Y=10$ be the size of each input record,
and $n_R$ be the total number of records.
For each record $i$, 
the UW computes a Merkle hash tree,
where the hashes of each $x_{i,j}$ ($j\in [n_X]$) and $y_{i,j}$ ($j\in [n_Y]$)
are leaf nodes, and obtain the root hash.
The root hash is signed with a public key to obtain the signature $\sigma_i$.
Then, the UW computes one global Merkle tree for all the records,
where the hashes of $\sigma|i$ for $i\in[n_R]$ are leaf nodes,
to obtain the root hash denoted as $root_R$;
the root hash is sent to the TLM.

Evaluation:
We evaluate the cost (i.e., execution time) of 
the following interaction between the UW and its TLM.
    The TLM randomly generates a record ID $i$ from $[n_R]$ and sends $i$ to the UW.
    The UW retrieves record $i$,
    and sends to the TLM:
        \begin{itemize}
            \item 
            $\sigma_i$ 
            \item 
            the co-path values of $hash(\sigma_i |i)$
            on the global Merkle tree.
        \end{itemize}
    Upon receiving the above values,
    the TLM recomputes the root hash and compares it with $root_R$ stored earlier. 
If the above interaction succeeds,
the TLM records $\sigma_i$.

Do the evaluations for varying combinations of $(n_X,n_Y,n_R)$.

\subsection{Cost for the Propagation through a Fully-connected Layer (Basic Selective Testing)}

Purpose: 
to test the validity of computation during the propagation through a fully-connected layer.

Preparation:
The UW randomly constructs a vector $\overrightarrow{o_{l-1}}$ with $n_{l-1}$ element
and a $n_{l-1}\times n_l$ matrix $\Theta^{(l)}$. 
Then, it computes a Merkle tree with hashes of elements in $\overrightarrow{o_{l-1}}$ as leaf nodes,
to obtain the root hash which is denoted as $com(\overrightarrow{o_{l-1}})$;
it sends $com(\overrightarrow{o_{l-1}})$ to the TLM.

Evaluation:
We evaluate the cost of the following interaction between the UW and its TLM,
which includes the following parts.

Part I (The UW's computation). 
The UW computes $\overrightarrow{i_{l}}=(\Theta^{(l)})^\intercal\overrightarrow{o_{l-1}}$.

Part II (The interaction for testing).
The UW computes the commitment for $\overrightarrow{i_{l}}$ (i.e., computes the root hash
of the Merkle tree with the hashes of the elements of this vector as leaf nodes)
and submits it to the TLM.
The TLM retrieves $\overrightarrow{o_{l-1}}$,
checks its validity through interaction with the UW (i.e.,
it randomly picks $p$ of the elements, asks the UW to provide the co-path values for these elements,
recomputes the root hash, and compare the root hash with earlier-stored commitment). 
The TLM randomly picks $p$ columns of $\Theta^{(l)}$ and multiplies them with 
$\overrightarrow{o_{l-1}}$ to obtain $p$ elements of $\overrightarrow{i_{l}}$.
Finally, it tests the validity of these $p$ elements through interaction with the UW
(i.e., asks the UW to provide co-pathvalues of these elements, recomputes the root hash,
and compares with the earlier-received commitment).

\subsection{Cost for the Forward Propagation through a Convolutional Layer (Basic Selective Testing)}

To be detailed. 

\subsection{Cost for the Backward Propagation through a Convolutional Layer (Basic Selective Testing)}

To be detailed.

\subsection{Cost for the Propagation through a Fully-connected Layer (Advanced Scheme)}

TO be detailed.
}

\comment{
\begin{itemize}
    \item 
    The cost for testing the validity of an initial input vector, as the size of input (i.e., $n_X+n_Y$) and test probability (i.e., $p$) vary.
    
    \item 
    The cost for a SIMD procedure at fully-connected layer $l$.
    \item 
    The cost for making a commitment.
    \item 
    The cost for testing a SIMD procedure at fully-connected layer $l$, which includes 
    (i) testing the validity of output at layer $l-1$,
    (ii) testing the validity of transformation (basic approach and advanced approach), and
    (iii) testing the validity of input at layer $l$,
    as the size of output, input, and test probability vary.
    
    \item 
    The cost for a SIMD procedure at convolutoinal layer $l$.
    \item 
    The cost for making a commitment.
    \item 
    The cost for testing a SIMD procedure at a convolutional layer, 
    as the size of output, input, and test probability vary. 
    Test both the basic approach and the advanced approach. 
\end{itemize}
}

\section{Related Works}

There have been many schemes devised in order to provide for private deep learning \cite{Mirshghallah2020PrivacyID,Chabanne2017PrivacyPreservingCO,Tramr2019SlalomFV,Bu2020DeepLW,Reagen2021CheetahOA}. The research commonly uses statistical, cryptographic, and hardware techniques in order to achieve this.
Differential privacy is a statistical technique that has been used in the data aggregation, training phase, and inference phases\cite{Mirshghallah2020PrivacyID}. Amongst the challenges presented by using this technique is maximizing privacy while minimizing loss of accuracy \cite{Bu2020DeepLW}.
One cryptographic approach for providing privacy during the inference \cite{Dowlin2016CryptoNetsAN} \cite{Boddeti2018SecureFM} and training phases is homomorphic encryption. Some research \cite{Reagen2021CheetahOA}, shows methods for using homomorphic encryption to protect the model, while others for protecting the data. In both cases, maintaining high performance, or throughput, is a persistent challenge. In order to apply activations such a ReLU to encrypted data, techniques such as using polynomial approximations with batch normalization have been developed \cite{Reagen2021CheetahOA}.
The hardware approach often involves using multiparty computation or trusted execution environments. Tramèr and Boneh \cite{Tramr2019SlalomFV} make use of TEEs to allow inference that protects the privacy of input data. Furthermore, their scheme provides integrity, and still allows for outsourcing linear operations to an untrusted external GPU. Their framework also takes advantage of the fact that matrix multiplication can be verified asymptotically more efficiently than it can be computed \cite{Freivalds1977ProbabilisticMC}.

The distributed nature of federated learning introduces new security concerns. 
Particularly, it may be possible for a curious server to infer information about the data used by clients 
during the training process. 
Secure aggregation \cite{Bonawitz2017PracticalSA,Fereidooni2021SAFELearnSA, Phong2018PrivacyPreservingDL} is an attempt to prevent this by ensuring that no party reveals its individual updates in the clear. 
For instance, VerifyNet \cite{Xu2020VerifyNetSA} builds upon the secure aggregation of 
PPML \cite{Bonawitz2017PracticalSA} while also providing the ability for participating clients to verify that the server performed the aggregation correctly.

Another potential threat in the federated setting comes from data poisoning. Clients could attempt to poison the global model by injecting maliciously labeled data before the learning starts. One approach to combat this is using more sophisticated aggregation rules \cite{Blanchard2017MachineLW} \cite{Yin2018ByzantineRobustDL}. Malicious clients may be able to circumvent the protections of Byzantine-robust aggregation rules by maliciously labeling data during the training phase, causing the model to have a large error rate once trained \cite{Fang2020LocalMP}. It is also possible for participants to engage in targeted attacks, which seek to impact classification for only specific classes, with other classes remaining largely unaffected.  \cite{Tolpegin2020DataPA} proposes a method of identifying these malicious participants, having the aggregating server perform PCA on the parameter updates received from participating clients.

Finally, clients may wish to receive credit for participating in the training without actually doing to the training that is expected of them, which little attention has been paid to defending against. This paper aims to fill this gap by
proposing a scheme to ensure local workers' honest execution of local learning based on 
the TEE technology, game theory and applied cryptography.

\section{Conclusion and Future Work}

In this paper, 
we proposed a game-theoretic and TEE-based scheme 
to ensure the correctness of computations performed by an untrusted worker in a federated learning system. 
Through smart contract and selectively choosing which untrusted computations to test, 
computational overhead performed by the TEE is minimal, 
drastically reduced when compared to the baseline schemes. 
In the future, it may be possible to expand the scheme to more kinds of neural networks. 
It may also be possible to improve the performance of the commitment process 
by using alternative cryptographic constructions.

\bibliographystyle{IEEEtran}
\bibliography{reference2}

\end{document}